\title{The Generalized Subterm Criterion in \protect\tttt\footnote{%
  This work was supported by FWF (Austrian Science Fund) project P27502.}}
\author{Christian Sternagel}
\affil{University of Innsbruck, Austria\\
  \texttt{christian.sternagel@uibk.ac.at}}
\authorrunning{C. Sternagel} 
\renewcommand*\Copyright[1]{%
  \def\@Copyright{%
      \ifx#1\@empty \else \textcopyright\ #1;\\\fi
    }}
\def\copyrightline{%
  \ifx\@EventLogo\@empty
  \else
    \setbox\@tempboxa\hbox{\includegraphics[height=42\p@]{\@EventLogo}}%
    \rlap{\hspace\textwidth\hspace{-\wd\@tempboxa}\hspace{\z@}%
          \vtop to\z@{\vskip-0mm\unhbox\@tempboxa\vss}}%
  \fi
  \scriptsize
  \vtop{\hsize\textwidth
    \nobreakspace\\
    \@Copyright
    \ifx\@EventLongTitle\@empty\else\@EventLongTitle.\\\fi
    \ifx\@EventEditors\@empty\else
      \@Eds: \@EventEditors
      ; Article~No.\,\@ArticleNo; pp.\,\@ArticleNo:\thepage--\@ArticleNo:\pageref{LastPage}.%
    \fi
  }
}
\subjclass{%
  F.4.2 Grammars and Other Rewriting Systems 
}
\keywords{%
  termination,
  subterm criterion,
  SMT encodings}
\newcommand\isafor{\textsf{Isa\kern-0.2exF\kern-0.2exo\kern-0.2exR}\xspace}
\newcommand\ceta{\textsf{C\kern-0.2exe\kern-0.5exT\kern-0.5exA}\xspace}
\newcommand\tttt{%
  \textsf{T\kern-0.2em\raisebox{-0.5ex}T\kern-0.2emT\kern-0.2em\raisebox{-0.5ex}2}\xspace}
\DeclareMathAlphabet{\mysf}{\encodingdefault}{\sfdefault}{m}{n}
\newcommand\var[1]{\textsf{\upshape#1}}
\newcommand\form[1]{\textsf{\upshape\uppercase{#1}}}
\newcommand\pos[2]{\var{p}^{#2}_{#1}}
\newcommand\wt[2]{\var{w}^{#2}_{#1}}
\newcommand\MULT{\form{M}}
\newcommand\GEQ{\form{GEQ}}
\newcommand\UPPER{\form{UPPER}}
\newcommand\NEQ{\form{NEQ}}
\newcommand\RT{\form{RT}}
\newcommand\SAN{\form{SAN}}
\newcommand\mulex[1]{#1_{\mathsf{mul}}}
\newcommand\rel{\succ}
\newcommand\relinv{\prec}
\newcommand\restr[2][\rel]{\mathrel{#1_{\downarrow#2}}}
\newcommand\subt{\lhd}
\newcommand\supt{\rhd}
\newcommand\supteq{\unrhd}
\newcommand\upper{\mathrm{upper}}
\newcommand\encmul{\form{MUL}}
\newcommand\THEN{\mathrel{\textsf?}}
\newcommand\ELSE{\mathrel{\textsf:}}
\newcommand\ITE[3]{#1\THEN#2\ELSE#3}
\newcommand\SUM{\sum}
\renewcommand\implies{\longrightarrow}
\newcommand\funs{\mathcal{F}}
\newcommand\mults{\mathcal{M}}
\newcommand\nats{\mathbb{N}}
\newcommand\etal{et\ al{.}}
\newcommand\suptmulex[1][\pi]{\mulex{\supt}^{#1}}
\newcommand\suptmulexeq[1][\pi]{\mulex{\supteq}^{#1}}
\newcommand\trs[1]{\mathcal{#1}}
\let\oldeqref\eqref
\renewcommand\eqref[1]{\oldeqref{eq:#1}}
\newcommand\thmref[1]{Theorem~\ref{thm:#1}}
\newcommand\lemref[1]{Lemma~\ref{lem:#1}}
\newcommand\Subterms{\mathcal{S}\mathsf{ub}}
\newcommand\PR{(\trs{P},\trs{R})}
\begin{document}

\maketitle

\begin{abstract}
We present an SMT encoding of a generalized version of the subterm criterion and
evaluate its implementation in \tttt.
\end{abstract}

\section{Preliminaries}

We assume basic familiarity with term rewriting \cite{BN98} in
general and the dependency pair framework \cite{GTS05} for proving termination
in particular. We start with a recap of terminology and notation that we use in
the remainder.

By $\mults(A)$, we denote the set of \emph{finite multisets} ranging over
elements from the set $A$. We write $M(x)$ for the \emph{multiplicity} (i.e.,
number of occurrences) of $x$ in the multiset $M$, use $+$ for multiset sum,
but otherwise use standard set-notation.

Given a relation $\rel$, its \emph{restriction to the set $A$}, written
$\restr[\rel]{A}$, is the relation defined by the set
$\{(x, y) \mid x \rel y, x \in A, y \in A\}$.
Moreover, for any function $f$, we use $x \rel^f y$ as a shorthand for
$f(x) \rel f(x)$.

The \emph{multiset extension} $\mulex{\rel}$ of a given relation $\rel$ is
defined by:
\begin{equation*}
M \mulex{\rel} N \text{ iff } \exists X\;Y\,Z.\:
X \neq \varnothing, M = X + Z, N = Y + Z, \forall y \in Y.\: \exists x \in X.\: x \rel y
\end{equation*}
A useful fact about the multiset extension is that we may always ``maximize''
the common part $Z$ in the above definition.
\begin{lemma}\label{lem:disj}
Consider an irreflexive and transitive relation $\rel$ and multisets $M$, $N$
such that $M \mulex{\rel} N$. Moreover, let $X = M - M \cap N$ and $Y = N - M
\cap N$. Then $X \neq \varnothing$ and $\forall y \in Y.\: \exists x \in X.\: x
\rel  y$.
\end{lemma}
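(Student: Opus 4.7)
The plan is to reuse the witnesses $X'$, $Y'$, $Z'$ furnished by the definition of $M \mulex{\rel} N$ and argue that the ``pruned'' pair $(X, Y)$ inherits the required properties. My opening step is a short piece of multiset arithmetic: working component-wise one has $(M \cap N)(x) = Z'(x) + \min(X'(x), Y'(x))$, hence $M \cap N = Z' + (X' \cap Y')$, which yields the compact descriptions $X = X' - (X' \cap Y')$ and $Y = Y' - (X' \cap Y')$. In particular every element of $X$ lies in $X'$, every element of $Y$ lies in $Y'$ (so $Y \subseteq Y'$), and an element $x \in X'$ fails to lie in $X$ precisely when $X'(x) \leq Y'(x)$, in which case $x \in Y'$ as well.

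Next I would establish $X \neq \varnothing$ by contradiction. If $X = \varnothing$, then by the observation above every element of the (finite, nonempty) support of $X'$ also belongs to $Y'$. Because $\rel$ is irreflexive and transitive, it is acyclic on this finite set, so there is a $\rel$-maximal element $x^* \in X'$. Since $x^* \in Y'$, the hypothesis supplies some $x \in X'$ with $x \rel x^*$, contradicting maximality.

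For the second conclusion I would fix an arbitrary $y \in Y \subseteq Y'$ and build a sequence $x_0, x_1, x_2, \ldots$ in $X'$, all satisfying $x_i \rel y$. Take $x_0 \in X'$ with $x_0 \rel y$, available from the hypothesis. If $x_i \in X$, stop; otherwise $x_i \in Y'$ by the observation above, and the hypothesis yields $x_{i+1} \in X'$ with $x_{i+1} \rel x_i$, so $x_{i+1} \rel y$ by transitivity. Transitivity combined with irreflexivity forces the $x_i$ to be pairwise distinct, so inside the finite support of $X'$ the construction must terminate, producing some $x_k \in X$ with $x_k \rel y$, as required.

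The step I expect to require the most care is the multiset bookkeeping that ties $(X, Y)$ back to the raw witnesses $(X', Y')$; once the identities $X = X' - (X' \cap Y')$ and $Y = Y' - (X' \cap Y')$ are pinned down, both conclusions reduce to the standard observation that a strict partial order on a finite set is well-founded.
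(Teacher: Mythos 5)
Your proof is correct and takes essentially the same route as the paper: you derive the same multiset identities $X = X' - (X' \cap Y')$ and $Y = Y' - (X' \cap Y')$ that the paper establishes at the end of its argument (as $M - M\cap N = I - I\cap J$), and you rely on the same core fact that an irreflexive, transitive relation on a finite set admits no infinite ascending chains. The only difference is presentational: the paper packages the chain-climbing step as well-founded induction with respect to the converse relation (justified by \lemref{finite-wf}), whereas you unroll it into an explicit pairwise-distinct finite sequence and handle $X \neq \varnothing$ by a maximal-element argument.
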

While intuitively obvious, a rigorous proof of this fact does not seem to be
widely known.\footnote{An alternative proof of this fact is indicated in Vincent
van Oostrom's PhD thesis \cite{O94}.}
In preparation for the proof, we recall the following easy fact about finite
relations.
\begin{lemma}\label{lem:finite-wf}
Every finite, irreflexive, and transitive relation is well-founded.
\end{lemma}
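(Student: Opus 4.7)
The plan is to argue by contradiction: assume that the relation $\rel$ is finite, irreflexive, and transitive, yet admits an infinite descending chain $x_0 \rel x_1 \rel x_2 \rel \cdots$. The goal is to derive a contradiction from the finiteness of $\rel$ as a set of pairs.

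First I would use transitivity to upgrade the consecutive steps of the chain into full comparability: for all $i < j$ we have $x_i \rel x_j$, so each pair $(x_i, x_j)$ with $i < j$ belongs to $\rel$. Next I would invoke irreflexivity to conclude that the elements $x_0, x_1, x_2, \ldots$ are pairwise distinct (otherwise, $x_i = x_j$ for some $i < j$ would give $x_i \rel x_i$). This provides infinitely many distinct pairs in $\rel$, contradicting the assumption that $\rel$ is finite.

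The main (and only) subtlety is interpreting ``finite relation'' correctly: here it means finite as a set of pairs, which in particular bounds the number of elements appearing in either component. Once that reading is fixed, the argument is essentially a pigeonhole observation on the infinite sequence $(x_i)_{i \in \nats}$ combined with transitivity and irreflexivity. No auxiliary constructions are needed, and the proof should fit into a few lines.
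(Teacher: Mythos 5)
Your proof is correct and follows essentially the same route as the paper: both derive a contradiction from an infinite chain by combining transitivity, irreflexivity, and the finiteness of $\rel$ via a pigeonhole-style counting argument. The only cosmetic difference is that the paper applies pigeonhole first (a recurring element forces $a_i \rel a_i$), whereas you take the contrapositive (irreflexivity forces pairwise distinctness, contradicting finiteness); the two are interchangeable.
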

\begin{proof}
Let $\rel$ be a finite, irreflexive, and transitive relation. For the sake of a
contradiction, assume that $\rel$ is not well-founded. Then there is an infinite
sequence
$
a_1 \rel a_2 \rel a_3 \rel \cdots
$
whose elements are in the finite (since $\rel$ is finite) field of $\rel$. But
then, by the (infinite) pigeonhole principle, there is some recurring element
$a_i$, i.e., $\cdots \rel a_i \rel \cdots \rel a_i \rel \cdots$. By
transitivity we obtain $a_i \rel a_i$ contradicting the irreflexivity of $\rel$.
\end{proof}
Noting that the converse of any finite, irreflexive, and transitive relation is
again finite, irreflexive, and transitive, \lemref{finite-wf} allows us to
employ well-founded induction where the induction hypothesis holds for
``bigger'' elements, as exemplified in the following proof.
\begin{proof}[Proof of \lemref{disj}]
Since $M \mulex{\rel} N$ we obtain $I \neq \varnothing$, $J$, and $K$ such that
$M = I + K$, $N = J + K$, and $\forall j \in J.\: \exists i \in I.\: i \rel j$.
Let $A = I - I \cap J$, $B = J - I \cap J$, and consider the finite set $D$ of
elements occurring in either of $I$ and $J$. Now, appealing to
\lemref{finite-wf}, we employ well-founded induction
with respect to $\restr[\relinv]{D}$ in order to prove:
\begin{equation}
\forall j \in J.\: \exists a \in A.\: a \rel j\tag{$\dagger$}\label{eq:disj}
\end{equation}
Thus we assume $j \in J$ for some arbitrary but fixed $j$ and obtain the
induction hypothesis (IH)
\mbox{$\forall c \restr{D} j.\: c \in J \implies \exists a \in A.\: a \rel c$}.
From $j \in J$ we obtain an $i \in I$ with $i \rel j$. Now if $i \in A$, then we
are done. Otherwise, $i \in J$ and by IH we obtain an $a \in A$ with $a \rel i$.
Since $\rel$ is transitive, this implies $a \rel j$, concluding the proof of
\eqref{disj}. But then also $\forall b \in B.\: \exists a \in A.\: x \rel b$
and $A \neq \varnothing$.
We conclude by noting the following two equalities:
\begin{alignat*}{5}
X &= M - M \cap N &&= (I + K) - (I + K) \cap (J + K) &&= I - I \cap J &&= A,\\
Y &= N - M \cap N &&= (J + K) - (I + K) \cap (J + K) &&= J - I \cap J &&= B.
\tag*{\qedhere}
\end{alignat*}
\end{proof}

\section{A Generalized Subterm Criterion}

Recall the subterm criterion -- originally by Hirokawa and
Middeldorp~\cite{HM04} and later reformulated as a processor for the dependency
pair framework -- which is a particularly elegant technique (due to its
simplicity and the fact that the $\trs{R}$-component of a dependency pair
problem $\PR$ may be ignored).
\begin{definition}[Simple projections]
A \emph{simple projection} is a function $\pi : \funs \to \nats$ that maps every
$n$-ary function symbol $f$ to some natural number $\pi(f) \in \{1,\ldots,n\}$.
Applying a simple projection to a term is defined by
$\pi(f(t_1,\ldots,t_n)) = t_{\pi(f)}$.
\end{definition}
\begin{theorem}
If $\trs{P} \subseteq {\supteq^\pi}$ for simple projection $\pi$,
then $\PR$ is finite iff $(\trs{P}\setminus{\supt^\pi},\trs{R})$ is.
\qed
\end{theorem}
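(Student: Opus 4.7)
The plan is to prove both implications separately. The ``only if'' direction is immediate, since $\trs{P}\setminus\supt^\pi\subseteq\trs{P}$: every infinite (minimal) chain over the reduced problem is already an infinite (minimal) $\PR$-chain, so finiteness of $\PR$ transfers to the reduced problem.

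For the ``if'' direction I argue contrapositively. Suppose there is an infinite minimal $\PR$-chain, witnessed by a substitution $\sigma$ and dependency pairs $(\ell_i\to r_i)_{i\in\nats}$ with $r_i\sigma\to_\trs{R}^*\ell_{i+1}\sigma$ and every $\ell_i\sigma$ being $\trs{R}$-terminating. Because the root symbols of $\ell_i$ and $r_i$ are tuple symbols that do not occur in $\trs{R}$, every rewrite step in $r_i\sigma\to_\trs{R}^*\ell_{i+1}\sigma$ takes place strictly below the root, so applying $\pi$ yields $\pi(r_i\sigma)\to_\trs{R}^*\pi(\ell_{i+1}\sigma)$ (in zero steps if the projected argument happens to be unaffected). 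Combining this with $\pi(\ell_i\sigma)\supteq\pi(r_i\sigma)$, which follows from $\trs{P}\subseteq\supteq^\pi$ and stability of $\supteq$ under substitutions, I obtain the infinite derivation
\begin{equation*}
\pi(\ell_1\sigma)\supteq\pi(r_1\sigma)\to_\trs{R}^*\pi(\ell_2\sigma)\supteq\pi(r_2\sigma)\to_\trs{R}^*\cdots
\end{equation*}
in which the $i$-th $\supteq$-step is strict exactly when $\ell_i\to r_i\in\supt^\pi$.

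Now I case-split on how many DPs used in the chain belong to $\supt^\pi$. If only finitely many do, then some suffix of the original chain uses only DPs from $\trs{P}\setminus\supt^\pi$ and is itself an infinite minimal $(\trs{P}\setminus\supt^\pi,\trs{R})$-chain, as required. Otherwise, infinitely many strict $\supt$-steps appear in the projected sequence, which starts at $\pi(\ell_1\sigma)$; since $\pi(\ell_1\sigma)\subteq\ell_1\sigma$ and subterms of $\trs{R}$-terminating terms are $\trs{R}$-terminating, this starting term is itself $\trs{R}$-terminating.

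The main obstacle is to derive a contradiction in this remaining scenario. My plan is to measure each $\trs{R}$-terminating term lexicographically by (length of its longest $\trs{R}$-derivation, term size). A $\to_\trs{R}$-step strictly decreases the first component, while a $\supt$-step strictly decreases the second and can only weakly decrease the first---this last point using that any $\trs{R}$-derivation of a subterm lifts, via context closure, to an $\trs{R}$-derivation of the enclosing term of equal length. The projected sequence therefore strictly decreases under this well-founded lexicographic order at every step, contradicting its infinitude and completing the contrapositive.
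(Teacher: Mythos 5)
The paper itself does not prove this theorem---it is stated with a \qed{} and attributed to Hirokawa and Middeldorp \cite{HM04}---so your proposal can only be measured against the standard argument. It is essentially correct and follows that standard route: the trivial ``only if'' direction, projection of the minimal chain (using that tuple symbols do not occur in $\trs{R}$, so all $\to_\trs{R}$-steps between consecutive pairs are below the root and survive or vanish under $\pi$), stability of $\supteq$ and $\supt$ under substitutions, the case split on whether pairs from $\supt^\pi$ occur infinitely often, and minimality to get an $\trs{R}$-terminating starting point $\pi(\ell_1\sigma) \subteq \ell_1\sigma$. Where you genuinely deviate is the final well-foundedness argument. The classical proof uses the commutation ${\supt} \cdot {\to_\trs{R}} \subseteq {\to_\trs{R}} \cdot {\supt}$ (closure of rewriting under contexts) to conclude that ${\to_\trs{R}} \cup {\supt}$ is well-founded on $\trs{R}$-terminating terms, with no cardinality assumptions. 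Your lexicographic measure (derivation height, term size) is a valid alternative, but note that the derivation height of a terminating term is only guaranteed to be finite when $\to_\trs{R}$ is finitely branching (K\"onig's lemma), i.e., essentially when $\trs{R}$ is finite---an implicit assumption you should state, even though it is harmless in the setting of this paper. Two cosmetic points: the dichotomy should be on how often pairs from $\supt^\pi$ \emph{occur} in the chain (for finite $\trs{P}$, ``finitely many pairs belong to $\supt^\pi$'' is vacuous), and the projected sequence does not decrease \emph{at every step}---the non-strict $\supteq$-steps are equalities---but a weakly decreasing sequence with infinitely many strict decreases in a transitive well-founded order still yields the desired contradiction.
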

Recall that the appropriate notion of \emph{finiteness} for the subterm
criterion is ``the absence of minimal infinite chains.''

For an AC-variant of the subterm criterion (i.e., a variant for rewriting modulo
associative and/or commutative function symbols), Yamada \etal~\cite{YSTK16}
generalized simple projections to so-called multiprojections.
\begin{definition}[Multiprojections]
A \emph{multiprojection} is a function $\pi : \funs \to \mults(\nats)$ that
maps every $n$-ary function symbol $f$ to a multiset $\pi(f) \subseteq
\mults(\{1,\ldots,n\})$.  Applying a multiprojection to a term yields a multiset of terms as
follows:
\[
\pi(t) = \begin{cases}
\pi(t_{i_1}) + \cdots + \pi(t_{i_k})
  & \text{if $t = f(t_1,\ldots,t_n)$ and $\pi(f) = \{i_1,\ldots,i_k\} \neq \varnothing$,}\\
\{t\} & \text{otherwise.}
\end{cases}
\]
We write $s \suptmulexeq t$ if either $s \suptmulex t$ or $\pi(s) = \pi(t)$.
\end{definition}
A compromise between simple projections and full multiprojections is to allow
recursive projections (possibly through defined symbols). While theoretically
subsumed by multiprojections, we included such recursive projections in our
experiments in order to assess their performance in practice.

The following is a specialization of the AC subterm criterion by Yamada
\etal~\cite[Theorem~33]{YSTK16} to the non-AC case.
\begin{theorem}
\label{thm:gsc}
Let $\pi$ be a multiprojection such that $\trs{P} \subseteq {\suptmulexeq}$ and
$f(\ldots) \suptmulexeq r$ for all $f(\ldots) \to r \in \trs{R}$ with $\pi(f) \neq
\varnothing$. Then $\PR$ is finite iff
$(\trs{P}\setminus{\suptmulex}, \trs{R})$ is.
\qed
\end{theorem}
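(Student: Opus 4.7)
The ``only if'' direction is immediate, since every minimal infinite chain in $(\trs{P}\setminus\suptmulex,\trs{R})$ is also a minimal infinite chain in $\PR$. For the converse I would argue contrapositively. Assume a minimal infinite chain $(s_i \to t_i, \sigma_i)_{i\geq 1}$ in $\PR$, so that $\sigma_i(t_i) \to_{\trs{R}}^* \sigma_{i+1}(s_{i+1})$ and every proper subterm of every $\sigma_i(t_i)$ is $\to_{\trs{R}}$-terminating. It suffices to show that at most finitely many $s_i \to t_i$ lie in $\suptmulex$, for then some tail of the chain uses only rules in $\trs{P}\setminus\suptmulex$, contradicting finiteness of the smaller problem.

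The plan is to project the chain through $\pi$ and compare successive multisets under the multiset extension of the combined relation $R := {\supt}\cup{\to_{\trs{R}}}$. A substitution identity
\[
\pi(\sigma(t)) \;=\; \SUM_{u\in\pi(t)} \pi(\sigma(u)),
\]
proved by structural induction on $t$, lifts both hypotheses to substitution instances: each $\trs{P}$-step yields $\pi(\sigma_i(s_i)) = \pi(\sigma_i(t_i))$ or a strict $\mulex{\supt}$-decrease (the latter precisely when $s_i \to t_i \in \suptmulex$). For each $\trs{R}$-step $u\to_{\trs{R}} v$ with redex at position $p$ I distinguish three cases: (a) every ancestor of $p$ is $\pi$-selected and the redex root $f$ has $\pi(f)\neq\varnothing$; the rule hypothesis combined with the identity gives $\pi(u) \mulex{\supt} \pi(v)$ or $\pi(u) = \pi(v)$. (b) some ancestor has $\pi(f)=\varnothing$; then the rewrite acts on a single atomic element of the current multiset, giving one $\mulex{R}$-step. (c) all ancestors are $\pi$-selected but the redex root has $\pi(f)=\varnothing$; then $\pi(u) = K + \{\sigma(\ell)\}$ and $\pi(v) = K + \pi(\sigma(r))$, and I split the transition as at most two $\mulex{R}$-steps $\{\sigma(\ell)\} \to \{\sigma(r)\} \to \pi(\sigma(r))$ (the second being an equality when $\pi(\sigma(r)) = \{\sigma(r)\}$).

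All terms appearing in the projected multisets are subterms of $\trs{R}$-reducts of proper subterms of some $\sigma_i(t_i)$, and hence $\to_{\trs{R}}$-terminating by minimality. On $\to_{\trs{R}}$-terminating terms $R$ is well-founded---for instance via the ordinal measure $\omega\cdot\mathrm{rank}(t) + |t|$, where $\mathrm{rank}(t)$ is the length of a longest $\to_{\trs{R}}$-reduction from $t$---and hence so is $\mulex{R}$, with \lemref{disj} pinpointing the strict-decrease witnesses. Consequently the projected sequence admits only finitely many strict $\mulex{\supt}$-decreases, so at most finitely many $\trs{P}$-steps lie in $\suptmulex$, as desired.

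The hard part is case (c): the second split step $\{\sigma(r)\} \to \pi(\sigma(r))$ is only a genuine $\mulex{R}$-step provided $\sigma(r) \notin \pi(\sigma(r))$. This holds automatically whenever $\pi(\sigma(r))$ contains more than one element (the recursion has then descended into strict children of $\sigma(r)$, so $\sigma(r)$ itself cannot reappear), while in the remaining degenerate case $\pi(\sigma(r)) = \{\sigma(r)\}$ and the second step is just an equality. Making this bookkeeping rigorous, together with the substitution identity and the systematic use of minimality (both to rule out cycles in $R$ and to guarantee $\to_{\trs{R}}$-termination of every multiset element), is where the bulk of the routine work sits.
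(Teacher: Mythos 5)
First, a point of reference: the paper contains no proof of \thmref{gsc}; the theorem is stated as a specialization of Yamada \etal's AC subterm criterion (Theorem~33 of the cited work), with the justification delegated to that paper and to the \isafor\ formalization. So there is no ``paper proof'' to match, and your direct argument must be judged on its own. Its architecture is the standard one for subterm-criterion processors and is sound: project the minimal chain through $\pi$, show that every step weakly decreases and every $\suptmulex$-pair strictly decreases the projected multiset with respect to $\mulex{R}$ for $R = {\supt}\cup{\to_{\trs{R}}}$, observe that $R$ is well-founded on $\to_{\trs{R}}$-terminating terms (your ordinal measure works for finitely branching $\trs{R}$), and conclude that only a finite prefix of the chain can contain $\suptmulex$-pairs. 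The case analysis of $\trs{R}$-steps and the degenerate sub-case of (c) are handled correctly.

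You have, however, mislocated the real subtlety. You claim that the substitution identity $\pi(\sigma(t))=\sum_{u\in\pi(t)}\pi(\sigma(u))$ by itself ``lifts both hypotheses to substitution instances,'' and declare case (c) to be the hard part. The identity does lift \emph{equality} of projections, but it does not lift the strict comparison: for arbitrary multisets of terms with $M\mulex{\supt}N$ it is false in general that $\sum_{u\in M}\pi(\sigma(u))\mulex{\supt}\sum_{v\in N}\pi(\sigma(v))$. For instance, with $x=f(a,y)$, $\pi(f)=\{1\}$, and $a$ a constant, we have $\{x\}\mulex{\supt}\{y\}$, yet $\pi(\sigma(x))=\{a\}$ dominates nothing in $\pi(\sigma(y))$. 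What rescues the lifting is a structural fact you never state: every element of $\pi(\ell)$ is either a variable or a term whose root symbol is \emph{not} selected by $\pi$. A dominating element $x$ of the strict part cannot be a variable (it must have a proper subterm), so its root is unselected, $\pi(\sigma(x))=\{\sigma(x)\}$, and every element of $\pi(\sigma(y))$ for $y\subt x$ is a subterm of $\sigma(y)$ and hence a proper subterm of $\sigma(x)$. This is the one place where the exact definition of $\pi$ on terms is load-bearing; as written, your lifting step is a genuine gap, and the same lemma is needed again in your case (a). Two smaller points: your trichotomy for $\trs{R}$-steps omits redexes reached through an argument position $i\notin\pi(f)$ of a selected symbol $f$ (harmless, since then $\pi(u)=\pi(v)$), and minimality should be taken in the standard sense that every $\sigma_i(t_i)$ is $\to_{\trs{R}}$-terminating, because when the marked root symbols are unselected the projected multisets contain $\sigma_i(t_i)$ itself rather than a proper subterm of it.
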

This result (which is also formalized in \isafor~\cite{TS09}) states the soundness of
a generalized version of the subterm criterion and thus gives the theoretical
backing for implementing such a technique in a termination tool. In the
following we are concerned with the more practical problem of an efficient
implementation.

That is, given a DP problem~$\PR$ we want to find a
multiprojection $\pi$ that satisfies the conditions of \thmref{gsc} and orients
at least one rule of $\trs{P}$ strictly by $\suptmulex$.

Since the problem of finding such a multiprojection seems similar to the problem
of finding an appropriate argument filter for a reduction pair \cite{CSLTG06}, and
the latter has been successfully tackled by various kinds of SAT and SMT
encodings, we take a similar approach.

\section{Implementation and Experiments}

There are basically two issues that have to be considered: (1)
how to encode a multiprojection $\pi$ and thereby the multiset $\pi(s)$, and (2)
how to encode the comparison between two encodings of multisets with respect to
the multiset extension of $\supt$.

In the following we use \var{lowercase sans serif} for propositional and
arithmetical variables, and \form{uppercase sans serif} for functions that
result in formulas.

\subparagraph*{Encoding Multiprojections.}

We encode the multiplicity of a term~$t$ in the multiset $\pi(s)$,
which is $0$ if $t$ does not occur in $\pi(s)$ at all,
by $\MULT_s(t) =
\encmul(1,s,t)$. The latter is defined as follows
\[
\encmul(w, s, t) = \begin{cases}
\ITE{\left(\displaystyle\bigwedge_{1\leq i\leq n} \lnot \pos{f}{i} \right)}{w}{0}
  & \text{if $s = t = f(t_1,\ldots,t_n)$}\\
w & \text{if $s = t$ and $t$ is a variable}\\
\displaystyle\SUM_{1\leq i\leq n}(\ITE{\pos{f}{i}}{\encmul(w\cdot\wt{f}{i},s_i,t)}{0})
  & \text{if $t \subt s = f(s_1,\ldots,s_n)$}\\
0 & \text{otherwise}
\end{cases}
\]
where $\ITE{b}{t}{e}$ denotes \emph{if $b$ then $t$ else $e$} and the intended
meaning of variables is that $\pos{f}{i} = \top$ precisely when \emph{$\pi$
projects to the $i$-th argument of $f$}, in which case $\wt{f}{i}$ gives the
\emph{weight of $i$ in $\pi(f)$}, i.e., its number of occurrences in
$\pi(f)$.\footnote{In experiments, replacing $\pos{f}{i} = \top$ by $\wt{f}{i} >
0$ resulted in a slightly increased number of timeouts.}

\subparagraph*{Encoding Multiset Comparison.}

Now consider the problem of finding $\pi$ such that $s \suptmulex t$ for given
terms $s$ and $t$. Noting that, independent of the exact $\pi$, $\pi(s)$ and
$\pi(t)$ are multisets over the finite set of subterms of $s$
and $t$, it suffices to find an encoding for comparing multisets over finite
domains. This allows us to make use of the following observation.
\begin{lemma}[Comparing multisets over finite domains]
\label{lem:finite-mulex}
Let $D$ be a finite set, and $M, N \subseteq \mults(D)$.
Then, for irreflexive and transitive $\rel$, $M \mulex{\rel} N$ is equivalent to
\begin{equation}
\forall d\in D.\: \upper(d) \implies M(d) \geq N(d)
\text{ and }
M \neq N\tag{$\star$}\label{eq:mulex}
\end{equation}
where $\upper(x)$ iff
$
\forall d \in D.\: d \rel x \implies M(d) = N(d)
$.
\end{lemma}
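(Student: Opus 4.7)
My plan is to prove the two directions of the equivalence separately, relying on \lemref{disj} for the forward direction and on well-founded induction (via \lemref{finite-wf}) for the backward direction. Throughout, let $X = M - M\cap N$ and $Y = N - M \cap N$, so that $d \in X$ iff $M(d) > N(d)$, $d \in Y$ iff $N(d) > M(d)$, and both $X$ and $Y$ are subsets of the finite set $D$.

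For the forward direction, assume $M \mulex{\rel} N$. By \lemref{disj} we immediately get $X \neq \varnothing$ (hence $M \neq N$) and $\forall y \in Y.\:\exists x \in X.\: x \rel y$. To establish the first conjunct of \eqref{mulex}, suppose $\upper(d)$ holds and, for a contradiction, $M(d) < N(d)$, i.e., $d \in Y$. Then there is some $x \in X$ with $x \rel d$, but $x \in X$ means $M(x) \neq N(x)$, which contradicts $\upper(d)$.

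For the backward direction, assume \eqref{mulex}. The goal is to show $X \neq \varnothing$ and $\forall y \in Y.\:\exists x \in X.\: x \rel y$, since then $M = X + (M \cap N)$ and $N = Y + (M \cap N)$ witness $M \mulex{\rel} N$. I would prove the second property first, by well-founded induction on $\restr[\relinv]{D}$, which is legitimate by \lemref{finite-wf}. Fix $y \in Y$. Since $N(y) > M(y)$, the first clause of \eqref{mulex} implies $\upper(y)$ is false, so there exists $c \in D$ with $c \rel y$ and $M(c) \neq N(c)$, hence $c \in X$ or $c \in Y$. If $c \in X$, we are done; if $c \in Y$, the induction hypothesis gives some $x \in X$ with $x \rel c$, and transitivity yields $x \rel y$. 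Finally, $X \neq \varnothing$ follows because if $X$ were empty, then $M \neq N$ forces $Y \neq \varnothing$; picking any $\rel$-maximal element $y^*$ of $Y$ (which exists by \lemref{finite-wf} applied to the finite relation $\restr{Y}$), the maximality together with $X = \varnothing$ would make $\upper(y^*)$ vacuously true, in contradiction with $N(y^*) > M(y^*)$.

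The main obstacle is the backward direction: one has to recognise that the negation of $\upper(y)$ is exactly what one needs to climb up through $Y$ and reach an element of $X$, and that this climb terminates precisely because $D$ is finite (hence $\rel$ restricted to $D$ is well-founded by \lemref{finite-wf}). The auxiliary argument for $X \neq \varnothing$ is a slight variant of the same idea and does not add much difficulty. The forward direction is essentially a direct corollary of \lemref{disj}.
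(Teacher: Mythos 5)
Your proof is correct and follows essentially the same route as the paper's: the forward direction is read off from \lemref{disj}, and the backward direction builds the witnessing decomposition and proves $\forall y \in Y.\:\exists x \in X.\: x \rel y$ by well-founded induction with respect to $\restr[\relinv]{D}$, using the failure of $\upper(y)$ to climb through $Y$ into $X$. The only (harmless) deviations are that you take $Z = M \cap N$ where the paper uses $Z = \{x \in M \cap N \mid \upper(x)\}$, and that you establish $X \neq \varnothing$ via a maximal element of $Y$ rather than by simply invoking the already-proved claim about $Y$.
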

\begin{proof}
We start with the direction from \eqref{mulex} to $M \mulex{\rel} N$.  Assume
\eqref{mulex} for $M$ and $N$, and define the multisets
$Z = \{x \in M \cap N \mid \upper(x)\}$,
$X = M - Z$, and
$Y = N - Z$ (i.e., $M = X + Z$ and $N = Y + Z$).
Then, appealing to \lemref{finite-wf}, we use well-founded induction with
respect to $\restr[\relinv]{D}$ in order to prove
\begin{equation}
\forall y \in Y.\: \exists x \in X.\: x \rel y\tag{$\ddagger$}\label{eq:yless}
\end{equation}
Thus we assume $y \in Y$ for some arbitrary but fixed $y$ and obtain the
induction hypothesis (IH) $\forall z
\restr{D} y.\: z \in Y \implies \exists x \in X.\: x \rel z$.
Also note that $\lnot \upper(y)$, since otherwise $M(y)
\geq N(y)$ by \eqref{mulex} and thus $Z(y) = N(y)$, contradicting $y \in Y$.
Therefore, we obtain $z \rel y$ with $M(z) \neq N(z)$ by definition of $\upper$.
Now, either $M(z) > N(z)$ or $N(z) > M(z)$. In the former case $z \in X$ and we
are done. In the latter case $z \in Y$ and thus we obtain an $x \in X$ such that
$x \rel z$ by IH and conclude \eqref{yless} by transitivity of $\rel$. It
remains to show $X \neq \varnothing$. Since $M \neq N$ there is some $x$ with
$M(x) \neq N(x)$. If $M(x) > N(x)$, then $x \in X$ and we are done.  Otherwise,
$N(x) > M(x)$ and thus $x \in Y$ and we conclude by invoking \eqref{yless}.

For the other direction, assume $M \mulex{\rel} N$. Then for
$Z = M \cap N$, $X = M - Z$, and $Y = M - Z$,
we have $X \neq \varnothing$, $X \cap Y = \varnothing$,
$M = X + Z$, $N = Y + Z$ and $\forall y \in Y.\: \exists x \in X.\: x \rel y$,
using \lemref{disj}. This further implies $M \neq N$.
Now assume $d \in D$ and $\upper(d)$.  Then either $d \in Y$ or $d \notin Y$. In
the latter case, clearly $M(d) \geq N(d)$, and we are done. In the former case,
we obtain an $x \in X$ with $x \rel d$. Moreover, since $X \cap Y =
\varnothing$, we have $x \notin Y$.  But then $M(x) \neq N(x)$, contradicting
$\upper(d)$.
\end{proof}

\subparagraph*{Encoding the Generalized Subterm Criterion.}
Putting everything together we obtain the encoding
\begin{gather*}
(\forall s \to t \in \trs{P}.\: \GEQ(s, t)) \land
(\exists s \to t \in \trs{P}.\: \NEQ(s, t)) \land {}\\
(\forall s \to t \in \trs{R}.\: \RT(s) \implies \GEQ(s, t)) \land
(\forall f \in \funs\PR.\: \SAN(f))
\end{gather*}
where
\begin{align*}
\GEQ(s, t) &\text{ iff }
\forall u \in \Subterms(s, t).\: \UPPER(u) \implies \MULT_s(u) \geq \MULT_t(u)
\\
\UPPER(u) &\text{ iff } \forall v \in \Subterms(s, t).\:
v \supt u \implies \MULT_s(v) = \MULT_t(v)
\\
\NEQ(s, t) &\text{ iff }
\lnot(\forall u \in \Subterms(s, t).\: \MULT_s(u) = \MULT_t(u))
\\
\RT(f(s_1,\ldots,s_n)) &\text{ iff }
\exists 1\leq i \leq n.\: \pos{f}{i}.
\\
\SAN(f) &\text{ iff } \bigwedge_{1\leq i \leq \textsf{arity}(f)}
\left(\pos{f}{i} \implies \wt{f}{i} > 0\right)
\end{align*}
Here $\Subterms(s, t)$ denotes the set of all (i.e., including $s$ and $t$
themselves) subterms of $s$ and $t$, and
$\SAN$ is a ``sanity check'' that makes sure that propositional and
arithmetical variables play well together.
Every satisfying assignment gives rise to a multiprojection $\pi$ satisfying the
conditions of \thmref{gsc}.

\subparagraph*{Experiments.}

\begin{table}
\caption{\label{tab:experiments}Experiments on 1498 standard TRSs of TPDB 10.3}
\def\twoc#1{\multicolumn{2}{c}{#1}}%
\def\C#1{\multicolumn{1}{c}{#1}}
\centering
\begin{tabular}{lccccccc}
\toprule
& \twoc{Yes} & \twoc{Maybe} & \twoc{Timeout} \\
\cmidrule(lr){2-3}
\cmidrule(lr){4-5}
\cmidrule(lr){6-7}
Projections & \C\# & \C{(sec)} & \C\# & \C{(sec)} & \C\# & \C{(sec)} & Total (sec) \\
\midrule
simple    & 265 & 31.1 & 1184 & 226.8 & 49 & 254.0 & 502.9\\
recursive & 292 & 35.4 & 1155 & 240.4 & 51 & 255.0 & 530.9\\
multi     & 351 & 61.2 & 1081 & 419.0 & 66 & 330.0 & 810.2\\
\addlinespace
all       & 352 & 30.4 & 1099 & 230.3 & 47 & 235.0 & 495.7\\
\bottomrule
\end{tabular}
\end{table}

We conducted experiments in order to assess our implementation. To this end we
took all the 1498 TRSs in the standard (as in ``standard term rewriting'')
category of the termination problem database (TPDB) version 10.3 and tried to
prove their termination with the following strategy: first compute dependency
pairs, then compute the estimated dependency graph~$\mathcal{G}$, and finally
try repeatedly to either decompose $\mathcal{G}$ into strongly connected
components or apply the subterm criterion. For the subterm criterion we tried
either \emph{simple projections} (simple), \emph{recursive projections}
(recursive), \emph{multiprojection} (multi), or a parallel combination of those
(all).

In summary, the parallel combination of different kinds of projections results
in a significant increase of power (i.e., number of yeses) and does not have a
negative impact on the speed, compared to the original implementation of the
subterm criterion (simple) of \tttt~\cite{KSZM09}.

Encouraged by this results, we incorporated our new implementation also into the
competition strategy of \tttt and compared it to its 2015 competition version.
In this way, we were able to obtain 12 additional yeses. However, each of those
12 systems could already be handled by some other termination tool in the 2015
termination competition.

\subparagraph*{Acknowledgments.}

We thank Vincent van Oostrom for pointing us to \lemref{finite-mulex} and
Bertram Felgenhauer for helpful discussion concerning $\encmul$.
We further thank the Austrian Science Fund (FWF project P27502) for supporting
this work.

\appendix


\bibliography{short,references}

\end{document}